\definecolor{URLColorBlue}{HTML}{2A1B81}
\definecolor{BlueGreen}{cmyk}{0.85,0,0.33,0}
\definecolor{RawSienna}{HTML}{A0522D}
\definecolor{TealLight}{HTML}{00C798}
\definecolor{Teal}{HTML}{008080}
\theoremstyle{plain}
\newtheorem{thm}{Theorem}
\newtheorem{cor}{Corollary}
\theoremstyle{definition}
\newtheorem{defn}{Definition}
\theoremstyle{remark}
\title{A weight-distribution bound for entropy extractors using linear binary codes}
\author{A. Meneghetti}
\author{M. Sala}
\author{A. Tomasi}
\affil{\small{Department of Mathematics, University of Trento}}
\begin{document}

\maketitle

\begin{abstract}
	We consider a bound on the bias reduction of a random number generator by processing based on binary linear codes. We introduce a new bound on the total variation distance of the processed output based on the weight distribution of the code generated by the chosen binary matrix. Starting from this result we show a lower bound for the entropy rate of the output of linear binary extractors.
\end{abstract}

\section{introduction}

A post-processing or entropy extractor is a deterministic algorithm able to decrease the statistical imperfections of a random number generator (RNG). A first and well-known example is the von Neumann procedure \cite{CGC-misc-art-von1951various, CGC-misc-art-peres1992iterating}, which performs best when the generator provides IID binary output. This procedure has been studied in detail in the case of non-identically distributed quantum sources \cite{CGC-misc-art-abbott2012neumann}, but its behaviour with non-independent generators is still an open problem. 
The von Neumann procedure also has a significant cost in terms of bit consumption, as well as a variable output rate. Its output bit strings have an average length of a quarter of the original sequence.

Other processing algorithms have included cryptographic primitives to randomize the output of a generator (e.g. AES), or software-based RNGs whose output is to be XORed with the output of the generator itself. Actually NIST recommendations include summing the output of a pseudo-RNG to that of each physical RNG \cite{CGC-msc-book-barker2012recommendation,CGC-misc-art-barker2012recommendation}. While the observed quality of the randomness thus produced is very good a posteriori, there is no formal proof of it a priori, meaning we cannot give a theorical description of how good we can expect the output to be.

Of particular interest for our purposes is an algebraic technique presented in \cite{CGC-misc-art-lacharme2008analysis}, where bit sequences are considered as vectors and used as input to a linear map between binary vector spaces. The map is hence defined as the left-multiplication with a binary matrix; the ratio between the number of rows and the number of columns specifies the compression ratio of this processing, and a theoretical bound on the quality of the output is provided. It is also shown that this result is strictly related to the minimum distance of the binary linear code generated by the matrix. In this work and in \cite{CGC-misc-art-lacharme2009analysis} Lacharme starts from this simple processing to describe the utilization of more general Boolean functions to build extractors.

Following the reasoning in \cite{CGC-misc-art-barker2012recommendation} we use this definition:
\begin{defn}\label{definizione generatore}
	A \textit{random number generator} (RNG) is a discrete process $Y_t$ taking values in some finite set $\Omega_y$.
	We can write $Y$ as the composition of two functions, a discrete time random process $X_t$ with values in a set $\Omega_x$ and a deterministic map $f$ from $\Omega_x$ into the set $\Omega_y$
	\begin{align*}
		\Omega_x	& \rightarrow \Omega_y	\\
		X_t			& \mapsto f(X_t) = Y_t
	\end{align*}
	The random process $X$ is called \textit{source of entropy} (SoE), while $f$ is a \textit{post-processing} or an \textit{(entropy) extractor}.
\end{defn}
We consider the sets $\Omega_x$ and $\Omega_y$ to be some finite fields. A binary generator is thus defined in the space $\mathbb{F}_2$, while a RNG providing sequences of $m$ bits as a single output is defined in $\mathbb{F}_{2^m}$.

The main result we present is Theorem \ref{theorem weight distribution}, which is a new bound on the output total variation distance of the binary procedure presented in \cite{CGC-misc-art-lacharme2008analysis}, namely we will provide a bound based on the entire weight distribution of the codewords generated by the chosen binary matrix, instead of relying only on the minimum distance. We are also able to show a lower bound for the entropy rate in Corollary \ref{cor entropy rate}.

\section{A bound on the total variation distance}

We consider vectors as column vectors unless otherwise specified, and we write $v^T$ for the transpose of any vector.

We consider a source of entropy $X_t$ with values in the finite field $\mathbb{F}_2$. 
We can build an $n$-dimensional random variable from a binary IID SoE $X_t$ on the space $\left(\mathbb{F}_2\right)^n$, and we call this process $\bar{X}_t = (X_{t,1},\ldots,X_{t,n})^T$, $t\in\mathbb{N}$. Given a binary $k\times n$ matrix, $k<n$ we define a new IID process $\bar{Y}_t$ using the linear map

\[
	\bar{Y}_t	= \begin{pmatrix}
					Y_{t,1}\\
					\vdots \\
					Y_{t,k}
				\end{pmatrix}
				= G \cdot
				\begin{pmatrix}
					X_{t,1}\\
					\vdots \\
					X_{t,n}
				\end{pmatrix}
				= G\cdot\bar{X}_t
\]

We recall that the term \textit{bias} is usually associated to the quality of an RNG. Given a binary process $Z$, the bias is the value
$
\frac{1}{2}\varepsilon_Z = \frac{1}{2}\left| P(Z=1)-P(Z=0)\right|.
$

The bias is simply a particular case of the distance between two distributions, restricted to the binary case. A much more general measure is the following:

\begin{defn}	\label{def:TVD}
	The \textit{Total Variation Distance} between two probability measures $P$ and $Q$ over the same set $\Omega$ will be denoted with $\mathrm{TVD}(P,Q)$ and is computed as
	\[
		\mathrm{TVD}(P,Q) = \frac{1}{2}\left\| P-Q \right\|_{1} = \frac{1}{2}\sum_{\omega\in \Omega}\left| P(\omega)-Q(\omega) \right|
	\]
	Given an IID process $Z_t$ and a probability $P_Z$ on the space $\Omega_Z$, we denote its TVD from the uniform distribution $U_{\Omega_Z}$ by
	\begin{equation}\label{eq: TVD from uniform}
		\frac{1}{2}\delta_Z=\mathrm{TVD}(P_Z, U_{\Omega_Z}).
	\end{equation}
\end{defn}
The following theorem was shown in \cite{CGC-misc-art-lacharme2008analysis}, but we find it useful to provide a sketch of the proof.
\begin{thm}[Lacharme]\label{binary extractor thm}
	If the linear code generated by $G$ is an $[n,k,d]$ binary code, then the output bias is $$\frac{1}{2}\varepsilon_y \leq\frac{1}{2}\varepsilon^d.$$
\end{thm}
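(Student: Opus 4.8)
The plan is to exploit the structure of the output $\bar Y_t = G\bar X_t$ componentwise, or more precisely to reduce the statement about the bias of each output coordinate to a product formula over the support of a codeword. Let me think about what "the output bias is $\tfrac12\varepsilon_y \le \tfrac12 \varepsilon^d$" means. Each output bit $Y_{t,j}$ is a linear combination of the input bits, i.e. $Y_{t,j} = \sum_{i} g_{ji} X_{t,i}$ over $\mathbb F_2$, which is the XOR of the input bits indexed by the support of the $j$-th row of $G$. So the first key fact I would invoke is the classical Piling-up Lemma: if $X_{t,1},\dots,X_{t,n}$ are independent bits each with bias $\varepsilon_i$ (meaning $P(X_i=1)-P(X_i=0)=\pm\varepsilon_i$, or $\varepsilon_i = |1-2P(X_i=1)|$), then the XOR of a subset $S$ of them has bias $\prod_{i\in S}\varepsilon_i$.

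First I would set up notation for the bias of a single bit and state the piling-up (XOR) lemma for independent, possibly non-identically-distributed bits, proving it by a short induction on $|S|$ using the identity $\varepsilon_{A\oplus B} = \varepsilon_A\varepsilon_B$ for two independent bits. The clean way to see this is to write $P(Z=0)-P(Z=1) = E[(-1)^Z]$ for a bit $Z$, so that $E[(-1)^{A\oplus B}] = E[(-1)^A]E[(-1)^B]$ by independence — the expectation of a product of independent variables factorizes, which instantly gives the multiplicativity of the signed bias. Under the simplifying IID assumption of the setup every $\varepsilon_i = \varepsilon$, so the XOR over a support of size $w$ has bias exactly $\varepsilon^w$.

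Next I would connect this to the code. Any output coordinate $Y_{t,j}$ corresponds to a row of $G$; the support size of that row is a weight, and more to the point, the relevant object is that the XOR is taken over $w$ positions where $w$ is the Hamming weight of the corresponding codeword. Since $\varepsilon \le 1$ (as a bias in $[0,1]$), the map $w\mapsto \varepsilon^w$ is nonincreasing, so $\varepsilon^w \le \varepsilon^{w_{\min}}$ where $w_{\min}$ is the smallest weight occurring. The minimum nonzero weight of the code generated by $G$ is exactly the minimum distance $d$, so every nonzero linear combination of the rows of $G$ has bias at most $\varepsilon^d$. Taking the worst output coordinate gives $\tfrac12\varepsilon_y \le \tfrac12\varepsilon^d$.

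The main obstacle, and the point that deserves care rather than heavy computation, is identifying which linear combination governs the overall output bias and ensuring the weight in question is genuinely at least $d$. A single output coordinate is the XOR over the support of a row of $G$, and each such row is itself a nonzero codeword, so its weight is $\ge d$; but one must be careful that the statement is about the individual bit bias and that $G$ has full rank $k$ (so no row is the zero codeword, which would trivially have bias $1$). Assuming the rows of $G$ are linearly independent and generate an $[n,k,d]$ code, every row has weight $\ge d$, and the monotonicity of $\varepsilon^w$ then closes the argument. I would flag this full-rank/nonzero-weight requirement explicitly, since it is the one hypothesis whose omission would break the bound.
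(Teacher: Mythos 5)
Your argument is correct and is essentially the paper's own (sketched) proof: the piling-up lemma you state via $E[(-1)^{A\oplus B}]=E[(-1)^A]E[(-1)^B]$ is exactly the paper's observation that summing $d$ independent biased bits reduces the bias exponentially, and your identification of each output coordinate with the XOR over the support of a row of $G$ (a nonzero codeword, hence of weight $\ge d$) is the paper's remark that each output item is the sum of at least $d$ such variables. Your explicit flagging of the full-rank hypothesis and the monotonicity of $w\mapsto\varepsilon^w$ is a welcome filling-in of details the paper leaves implicit, but it is not a different route.
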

\begin{proof} The proof relies on observing that summing $d$ independent binary variables with a given bias yields an output with a bias reduced by an exponential factor of $d$. Multiplying a vector of such variables by $G$ as above means that each output item is the result of the sum of at least $d$ such variables.
\end{proof}
We remark that even though $\bar{Y}_t$ is an IID random variable on $\left(\mathbb{F}_2\right)^k$, it is not true that $Y_{t,i}$ has the same probability distribution for all $i=1,\ldots,k$, meaning the output of the extractor is no longer a truly binary process.
A minimal entropy bound for $\bar{Y}_t$ is provided in \cite{CGC-misc-art-lacharme2008analysis}, based on this result:
\begin{thm}[Lacharme]\label{thm infinity norm}
Given $\bar{X}$, $\varepsilon_x$, $G$ and $\bar{Y}$ as above, 
\begin{equation}\label{bound infinity norm}
P(\bar{Y}=\gamma)\leq \frac{1}{2^k}+\varepsilon_x^d
\end{equation}
for each $\gamma\in\left(\mathbb{F}_2\right)^k$
\end{thm}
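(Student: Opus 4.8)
The plan is to analyze the distribution of $\bar{Y}$ through its Walsh--Hadamard (Fourier) transform on the group $(\mathbb{F}_2)^k$, which turns the linear map $G$ and the independence of the source into a clean product formula. For each frequency $u\in(\mathbb{F}_2)^k$ I set
\[
\widehat{P}_{\bar{Y}}(u)=\mathbb{E}\!\left[(-1)^{u^T\bar{Y}}\right].
\]
By Fourier inversion one has, for every $\gamma\in(\mathbb{F}_2)^k$,
\[
P(\bar{Y}=\gamma)=\frac{1}{2^{k}}\sum_{u\in(\mathbb{F}_2)^k}(-1)^{u^T\gamma}\,\widehat{P}_{\bar{Y}}(u),
\]
and since $\widehat{P}_{\bar{Y}}(\mathbf{0})=1$ the term $u=\mathbf{0}$ contributes exactly the uniform value $1/2^{k}$. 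The whole argument then reduces to bounding the remaining sum over $u\neq\mathbf{0}$.

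Next I would compute the coefficients explicitly. Writing the components of $\bar{X}$ as $X_1,\dots,X_n$ and using $\bar{Y}=G\bar{X}$ gives $u^T\bar{Y}=u^TG\bar{X}=(G^Tu)^T\bar{X}$, so with $w:=G^Tu$,
\[
\widehat{P}_{\bar{Y}}(u)=\mathbb{E}\!\left[(-1)^{w^T\bar{X}}\right]=\prod_{j:\,w_j=1}\mathbb{E}\!\left[(-1)^{X_{j}}\right]=\mu^{\,\mathrm{wt}(w)},
\]
where $\mu:=\mathbb{E}[(-1)^{X_{1}}]$ satisfies $|\mu|=\varepsilon_x$, the factorization being exactly where the IID hypothesis on $\bar{X}$ is used and $\mathrm{wt}(w)$ denoting the Hamming weight. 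Because $G$ has full row rank $k$, the map $u\mapsto G^Tu$ is a bijection from $(\mathbb{F}_2)^k$ onto the code generated by $G$; in particular $u\neq\mathbf{0}$ corresponds precisely to the nonzero codewords $w$, all of which satisfy $\mathrm{wt}(w)\geq d$.

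Combining these, I would take absolute values term by term: since $\varepsilon_x\le 1$ and $\mathrm{wt}(w)\ge d$ for every nonzero codeword, $|\widehat{P}_{\bar{Y}}(u)|=\varepsilon_x^{\mathrm{wt}(w)}\le\varepsilon_x^{d}$. The triangle inequality over the $2^{k}-1$ nonzero frequencies then yields
\[
P(\bar{Y}=\gamma)\le\frac{1}{2^{k}}\Big(1+\sum_{u\neq\mathbf{0}}\varepsilon_x^{\mathrm{wt}(G^Tu)}\Big)\le\frac{1}{2^{k}}\big(1+(2^{k}-1)\varepsilon_x^{d}\big)\le\frac{1}{2^{k}}+\varepsilon_x^{d},
\]
which is the claimed bound \eqref{bound infinity norm}.

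I expect the only delicate points to be bookkeeping rather than conceptual: fixing the Fourier normalization so that the inversion formula and $\widehat{P}_{\bar{Y}}(\mathbf{0})=1$ come out consistently, and justifying the bijection $u\mapsto G^Tu$ onto the code (so that "nonzero frequency" really does mean "nonzero codeword of weight $\ge d$"). I would also note that keeping the intermediate sum $\sum_{u\neq\mathbf{0}}\varepsilon_x^{\mathrm{wt}(G^Tu)}$ intact, rather than bounding it immediately by $(2^{k}-1)\varepsilon_x^{d}$, is precisely what later yields the sharper estimate of Theorem~\ref{theorem weight distribution}, since that sum is nothing but the weight enumerator of the code evaluated at $\varepsilon_x$.
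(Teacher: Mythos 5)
Your proof is correct. Note first that the paper itself states this theorem without proof, citing Lacharme, so there is no in-paper argument to compare against directly; the closest analogue is the paper's proof of Theorem~\ref{theorem weight distribution}, which is the combinatorial counterpart of exactly the computation you carry out in Fourier space. All the individual steps check out: the inversion formula with your normalization is consistent with $\widehat{P}_{\bar{Y}}(\mathbf{0})=1$; the identity $u^T G\bar{X}=(G^Tu)^T\bar{X}$ together with independence gives $\widehat{P}_{\bar{Y}}(u)=\mu^{\mathrm{wt}(G^Tu)}$ with $|\mu|=\varepsilon_x$; and since $G$ generates an $[n,k,d]$ code it has full row rank, so $u\mapsto G^Tu$ is indeed a bijection onto the code and every nonzero frequency contributes at most $\varepsilon_x^{d}$ in absolute value, yielding the bound after the triangle inequality.

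The comparison with the paper's method for Theorem~\ref{theorem weight distribution} is worth spelling out. There the authors reduce $G$ to systematic form, condition on the $n-k$ free input coordinates, expand each $\rho(\cdot)=\tfrac{1}{2}+v(\cdot)$, and then prove by a hand-made parity argument that every term indexed by a non-codeword $b$ sums to zero, leaving $\delta_Y\le\sum_{l\ge d}A_l\varepsilon_x^{l}$. Your character-sum argument makes that cancellation automatic: orthogonality of the characters $(-1)^{u^T\gamma}$ is precisely the statement that only codewords survive, and no systematic-form reduction or case analysis on $b$ is needed. As you observe, keeping $\sum_{u\neq\mathbf{0}}\varepsilon_x^{\mathrm{wt}(G^Tu)}=\sum_{l=d}^{n}A_l\varepsilon_x^{l}$ intact and summing $\bigl|P(\bar{Y}=\gamma)-2^{-k}\bigr|$ over $\gamma$ recovers the paper's main bound \eqref{eq our bound} in a few lines, so your route is both a valid proof of Theorem~\ref{thm infinity norm} and a genuine streamlining of the paper's central computation. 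The only bookkeeping point worth making explicit in a final write-up is the full-row-rank hypothesis on $G$ (implicit in calling the code $[n,k,d]$), without which $u\mapsto G^Tu$ would not be injective.
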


In order to use Theorems \ref{binary extractor thm} and \ref{thm infinity norm} we need an IID binary source, and to have good results we need a generator matrix for a binary code with a large minimum distance. This is not easy to obtain without using large matrices, which imply a lot of computations and/or a high compression.  
Considering instead the entire weight distribution of the code we may choose smaller binary matrices.

We consider $\{X_t\}_{t>0}$ as a sequence of IID random variables with values in the finite field $\mathbb{F}_2$. Being IID, $P(X_t=\omega)=P(X_s=\omega)$ for each $t,s$ and each specific element $\omega \in \mathbb{F}_2$, i.e. we can associate a probability distribution $\rho$ to the stochastic process:
\begin{equation}
	\rho(\omega)=P(X_t=\omega)\quad \forall t
	\label{eq:rho}
\end{equation}
We denote with $U_{\mathbb{F}_2}$ the uniform distribution on the same space, so 
$$
	U_{\mathbb{F}_2}(\omega) = \frac{1}{2}	\quad\forall\omega\in\mathbb{F}_2
$$
We also use the following definition for convenience:
\begin{defn} \label{def:element-wise_variation}
	Given a random variable $Z$ with values in $\mathbb{F}_2$, we write  $P(Z=\omega)=U_{\mathbb{F}_2}+v(\omega)$. We call $v$ the \emph{element-wise variation  distance function}. 
\end{defn}
We observe $v$ has the following properties:
	\begin{enumerate}
		\item	$v:\mathbb{F}_2 \rightarrow \left[-\frac{1}{2},\frac{1}{2}\right]$
		\item	$\sum_{\omega} v(\omega)=0$
		\item	$\mathrm{TVD}(P_z,U_{\mathbb{F}_2})=\frac{1}{2}\sum_{\omega}\left|v(\omega)\right|=\frac{1}{2}\|v\|_1$
	\end{enumerate}
It may be noted by comparison with Definition \ref{def:TVD} that this is simply an element-wise measure of the total variation distance, for which we use a separate short-hand notation so as not to confuse the distance between two whole distributions with the difference in probabilities of a single element from the uniform distribution.

The process $\bar{Y}=G\cdot \bar{X}$ is now an IID process in the space $(\mathbb{F}_2)^k$ and not in $\mathbb{F}_2$. In the following theorem we provide a bound for the Total Variation Distance of the process in the space $(\mathbb{F}_2)^k$ using the weight distribution of the linear code generated by $G$.

\begin{thm}\label{theorem weight distribution}
	Let $A_l$ be the number of words with weight $l$ of the linear code generated by $G$. Let $X$ and $\bar{Y}$ be as above, and let $\frac{\delta_Y}{2}$ be the TVD of $\bar{Y}$ defined in \eqref{eq: TVD from uniform}. Then 
	\begin{equation}\label{eq our bound}
		\delta_Y\leq \sum_{l=d}^n A_l \varepsilon_x^l	
	\end{equation}
\end{thm}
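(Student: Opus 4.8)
The plan is to express the probability distribution of $\bar{Y}$ via a Fourier-analytic (Walsh–Hadamard) expansion over $(\mathbb{F}_2)^k$, since the total variation distance from uniform is naturally controlled by the nontrivial Fourier coefficients. Concretely, for each $\gamma \in (\mathbb{F}_2)^k$ I would write
\[
	P(\bar{Y}=\gamma) = \frac{1}{2^k}\sum_{u\in(\mathbb{F}_2)^k} (-1)^{u^T\gamma}\,\widehat{P}(u),
\]
where $\widehat{P}(u) = \mathbb{E}\big[(-1)^{u^T\bar{Y}}\big]$. The key observation is that $u^T\bar{Y} = u^T G \bar{X} = (G^T u)^T \bar{X}$, so each Fourier coefficient of $\bar{Y}$ is determined by the correlation of the input bits along the direction $c = G^T u$, which is precisely a codeword of the code generated by $G$.

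First I would compute $\widehat{P}(u)$ for $u\neq 0$ explicitly. Using the IID structure of $\bar{X}$ and writing $\mathbb{E}[(-1)^{X}] = \pm\varepsilon_x$ (with $|\mathbb{E}[(-1)^{X}]| = \varepsilon_x$ by the definition of bias), the expectation factorizes across the coordinates on which $c=G^Tu$ is supported:
\[
	\widehat{P}(u) = \prod_{i : c_i = 1} \mathbb{E}\big[(-1)^{X_{t,i}}\big],
	\qquad\text{so}\qquad
	\big|\widehat{P}(u)\big| = \varepsilon_x^{\,\mathrm{wt}(c)},
\]
where $\mathrm{wt}(c)$ is the Hamming weight of the codeword $c = G^T u$. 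The coordinates with $c_i=0$ contribute a factor $\mathbb{E}[1]=1$ and drop out. This is the step that converts the analytic quantity into the combinatorial weight data of the code.

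Next I would bound the total variation distance. Starting from
\[
	\frac{\delta_Y}{2} = \frac{1}{2}\sum_{\gamma}\Big|P(\bar{Y}=\gamma) - \frac{1}{2^k}\Big|
	= \frac{1}{2}\sum_{\gamma}\Big|\frac{1}{2^k}\sum_{u\neq 0}(-1)^{u^T\gamma}\widehat{P}(u)\Big|,
\]
I would apply the triangle inequality inside the absolute value, pushing the sum over $u\neq 0$ outside, and then perform the sum over $\gamma$ (which contributes a factor $2^k$ from $\sum_\gamma |(-1)^{u^T\gamma}| = 2^k$). This yields $\delta_Y \leq \sum_{u\neq 0}|\widehat{P}(u)| = \sum_{u\neq 0}\varepsilon_x^{\mathrm{wt}(G^Tu)}$. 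Assuming $G$ has full row rank so that $u\mapsto G^Tu$ is a bijection between $(\mathbb{F}_2)^k\setminus\{0\}$ and the nonzero codewords, I would reindex this sum by codeword weight, grouping the $A_l$ codewords of each weight $l$ together, to obtain $\delta_Y \leq \sum_{l=1}^n A_l\,\varepsilon_x^l$; since $A_l = 0$ for $1\le l < d$, the sum starts at $l=d$, giving the claim.

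The main obstacle I anticipate is controlling the sign of $\mathbb{E}[(-1)^{X}]$ so that the bound is genuinely an upper bound rather than only holding in absolute value: the clean factorization gives $|\widehat{P}(u)|=\varepsilon_x^{\mathrm{wt}(c)}$, but the triangle-inequality step that discards the oscillating signs $(-1)^{u^T\gamma}$ is exactly what makes the absolute values line up, so care is needed to ensure no cancellation is being wrongly exploited or lost. A secondary subtlety is the full-rank assumption on $G$, which is needed for the reindexing by codewords to be exact; if $G$ is not full rank the map $u\mapsto G^Tu$ is many-to-one and the bound would need an adjustment, but in the extractor setting $G$ generates an $[n,k]$ code and is assumed to have rank $k$, so the bijection holds.
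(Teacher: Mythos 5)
Your proof is correct, but it takes a genuinely different route from the paper's. You work in the Fourier (Walsh--Hadamard) domain: the identity $u^T\bar Y=(G^Tu)^T\bar X$ makes it immediate that the nontrivial Fourier coefficients of the output distribution are indexed by the nonzero codewords $c=G^Tu$ and factor, by independence, into $\prod_{i:c_i=1}\mathbb{E}[(-1)^{X_i}]$, of absolute value $\varepsilon_x^{\mathrm{wt}(c)}$; the triangle inequality then gives $\delta_Y\le\sum_{u\neq 0}|\widehat P(u)|=\sum_{l\ge d}A_l\varepsilon_x^l$. The paper instead stays in the probability domain: it reduces $G$ to systematic form, conditions on the last $n-k$ inputs, writes each marginal as $\rho=\tfrac12+v$, expands the resulting product into $2^n$ terms indexed by $b\in(\mathbb{F}_2)^n$, and shows by a parity argument that every term with $b$ outside the code sums to zero, after which the surviving codeword terms are bounded exactly as in your reindexing step. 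The two computations are morally the same --- the vanishing of the non-codeword terms in the paper is precisely the statement that the output spectrum is supported on $\{G^Tu\}$ --- but your version is shorter, avoids the systematic-form reduction and the case analysis, and makes the role of the code transparent; the paper's version is more elementary and self-contained. Two small points in your write-up: the sign worry you raise in the last paragraph is vacuous, since the triangle inequality only ever uses $|\widehat P(u)|$, which equals $\varepsilon_x^{\mathrm{wt}(c)}$ exactly regardless of the sign of $\mathbb{E}[(-1)^X]$; and the full-rank assumption you flag is already implicit in the paper (it assumes $G$ generates an $[n,k,d]$ code and can be put in systematic form), so no adjustment is needed.
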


\begin{proof}
First of all we remark that we can assume G is a systematic generator matrix, or can be written as such by simply applying Gaussian elimination; the application of the resulting extractor gives a process whose probability mass function is a permutation of the original probability distribution. In particular the resulting TVD is equal to the original one. We can hence consider the following extractor
\begin{equation*}
	\left(
	\begin{array}{c}
		Y_1\\
		\vdots\\
		Y_k
	\end{array}
	\right)
	=
	\begin{bmatrix}
		1&\cdots &0&g_{1,k+1}&\cdots&g_{1,n}\\
		\vdots&\ddots&\vdots&\vdots&&\vdots\\
		0&\cdots&1&g_{k,k+1}&\cdots&g_{k,n}
	\end{bmatrix}
	\cdot
	\left(
	\begin{array}{c}
		X_1\\
		\vdots\\
		X_n
	\end{array}
	\right)
\end{equation*}
If we want to compute 
\begin{equation}\label{equation single probability gamma}
P(Y_1=\gamma_1,\ldots,Y_k=\gamma_k)
\end{equation}
we can substitute to each $Y_i$ the corresponding linear combination of elements $X_1\ldots,X_n$.
We call $g_i$ the row vector $(g_{i,k+1},\ldots,g_{i,n})$, so that if $(X_{k+1},\ldots,X_n)^T$ is equal to $\omega=(\omega_{k+1},\ldots,\omega_n)^T\in\left(\mathbb{F}_2\right)^{n-k}$ we can write 
\begin{equation*}
Y_i=X_i+g_i\cdot\omega
\end{equation*}
 Applying the law of total probability, and then using the independence of the variables $X_i$ we obtain that equation \eqref{equation single probability gamma} is equal to
\begin{equation*}
\sum_{\omega\in\left(\mathbb{F}_2\right)^{n-k}}
\rho(\gamma_1+g_1\cdot\omega)
\cdots\rho(\gamma_k+g_k\cdot\omega)\cdot \rho(\omega_{k+1})\cdots \rho(\omega_{n})
\end{equation*}
We can now use the fact that $\rho(\cdot)=\frac{1}{2}+v(\cdot)$. We substitute this into the above equation and we compute the product. We call $b=(b_1,\ldots,b_n)$ a given binary vector on the space $\left(\mathbb{F}_2\right)^n$, and $W(b)$ its Hamming weight. With this notation the product can be written as
\begin{equation}\label{equation all b}
\sum_{\omega\in\left(\mathbb{F}_2\right)^{n-k}}\sum_{b\in\left(\mathbb{F}_2\right)^n}\frac{1}{2^{n-W(b)}}
v^{b_1}(\gamma_1+g_1\cdot\omega)
\cdots v^{b_n}(\omega_{n})
\end{equation}
We can exchange the two sums, and observe this value for a fixed vector $b$, i.e. we have
\begin{equation}\label{equation fixed b}
\frac{1}{2^{n-W(b)}}
\sum_{\omega\in\left(\mathbb{F}_2\right)^{n-k}}
v^{b_1}(\gamma_1+g_1\cdot\omega)
\cdots v^{b_n}(\omega_{n})
\end{equation}
We remark now that if a given $b_i=0$ then $v^{b_i}(\cdot)=1$, and if $b_i=1$ then $v^{b_i}(\cdot)=v(\cdot)$. Moreover if $W (b) = w$ then the product in \eqref{equation fixed b} is the product of exactly $w$ terms of the form $v(\cdot)\cdots v(\cdot)$ and the coefficient on the left becomes $\frac{1}{2^{n-w}}$.
\\
We introduce now the following notation:
\begin{equation*}
\left\{
\begin{array}{l}
V_1^b(\omega)=v^{b_1}(\gamma_1+g_1\cdot\omega)\cdots v^{b_k}(\gamma_k+g_k\cdot\omega)\\
V_2^b(\omega)=v^{b_{k+1}}(\omega_{k+1})\cdots v^{b_{n}}(\omega_n)
\end{array}
\right.
\end{equation*}
Equation \eqref{equation fixed b} becomes
\begin{equation}\label{equation V1 times V2}
\frac{1}{2^{n-W(b)}}
\sum_{\omega\in\left(\mathbb{F}_2\right)^{n-k}}
V_1^b(\omega)\cdot V_2^b(\omega)
\end{equation}
$V_1^b(\omega)$ is the product of some terms of the form $v(\gamma_i+g_i\cdot \omega)$. Due to the product $g_i\cdot \omega$, not necessarily all the $\omega_j$ appear as argument in all the terms. In particular, suppose that a certain $\omega_j$ appears in $V_1^b(\omega)$ an even number of times, then the map $\omega_j\mapsto V_1^b(\omega)$ is constant.
\\
To prove this observe that $V_1^b(\cdot)$ is the product of maps that can assume only two values, in fact $v(0)=\frac{\varepsilon_x}{2}=-v(1)$. If we fix $b, \gamma_1,\ldots,\gamma_k, \omega_{k+1},\ldots, \omega_n$, then $V_1^b(\omega)$ can be $\pm \left(\frac{\varepsilon_x}{2}\right)^{w_1}$, where $w_1$ is the number of terms in the product, i.e. the number of ones among the first $k$ components of $b$.
\\
If we now keep the same choices for all the parameters and we change only $\omega_j$, the maps in the product in which $\omega_j$ appears change their sign. We assumed $\omega_j$ appears in an even number of terms, hence the final result is still the same and $\omega_j\mapsto V_1^b(\omega)$ is constant.
\\
In the same way if a certain $\omega_j$ appears an odd number of times in $V_1^b(\omega)$, then the map $\omega_j\mapsto V_1^b(\omega)$ is not constant.
\\
Observe that if the first $k$ components of the vector $b$ are fixed, there is only one way to choose the last $n-k$ bits in order to obtain a codeword $c$ of the code generated by $G$. In particular, the last $n-k$ bits of $c$ have a $0$ in each position corresponding to a $\omega_j$ that appears an even number of times, and a $1$ in each position corresponing to a $\omega_j$ that appears an odd number of times.
\\
Hence we assume now that the chosen $b$ is a codeword, and we look again at equation \eqref{equation V1 times V2}. In this case the product $V_2^b(\omega)$ depends only on the elements $\omega_j$ that appears an odd number of times.
\\
If instead $b$ is not a codeword there are only two possibilities: either there is a term $v(\omega_j)$ in $V_2^b(\omega)$ with $\omega_j$ that appears an even number of times in $V_1^b$, or $v(\omega_j)$ is not in the product even though $\omega_j$ appear an odd number of times in $V_1^b$.
\\
In both cases there exists an index $j$ so that the map $\omega_j\mapsto V(\omega_j):= V_1^b(\omega)\cdot V_2^b(\omega)$ is such that $V(0)=-V(1)$. Hence
\begin{equation*}
\sum_{\omega}V_1^b(\omega)V_2^b(\omega)=\sum_{\omega_i\neq \omega_j}\left(\sum_{\omega_j}V_1^b(\omega)V_2^b(\omega)\right)=0
\end{equation*}
At this point we have that if $b$ is not a codeword, then equation \eqref{equation fixed b} is zero. Using this fact we can rewrite equation \eqref{equation all b} as
\begin{equation*}
\sum_{b\in\mathcal{C}}
\frac{1}{2^{n-W(b)}}
\sum_{\omega\in\left(\mathbb{F}_2\right)^{n-k}}
v^{b_1}(\gamma_1+g_1\cdot\omega)
\cdots v^{b_n}(\omega_{n})
\end{equation*}
where $\mathcal{C}$ is the code generated by $G$.
\\
We can now write a formula for the TVD $\frac{\delta_Y}{2}$ of $\bar{Y}$,
\begin{equation*}
\delta_Y=\sum_{\gamma\in\left(\mathbb{F}_2\right)^k}\left|P(\bar{Y}=\gamma)-\frac{1}{2^k} \right|
\end{equation*}
Using \eqref{equation V1 times V2}, we can  write $\delta_Y$ as
\begin{equation}\label{equation TVD V1 V1}
\sum_{\gamma_1,\ldots,\gamma_k}\left|
\sum_{b\in\mathcal{C}}\left(
\frac{1}{2^{n-W(b)}}
\sum_{\omega\in\left(\mathbb{F}_2\right)^{n-k}}
V_1^b(\omega)\cdot V_2^b(\omega)
\right)
-\frac{1}{2^k}\right|
\end{equation}

Consider now the word $b=0$, which belongs to $\mathcal{C}$ for every generator matrix $G$.
The term inside the parenthesis becomes $\frac{1}{2^{n}}\cdot\sum_{\omega\in\left(\mathbb{F}_2\right)^{n-k}}1=\frac{1}{2^k}$, and we can simplify it together with the term $-\frac{1}{2^k}$.
\\
Equation \eqref{equation TVD V1 V1} becomes
\begin{equation}\label{equation TVD V1 V1 no zero}
\delta_Y=\sum_{\gamma_1,\ldots,\gamma_k}\left|
\sum_{b\in\mathcal{C}\smallsetminus 0}\left(
\frac{1}{2^{n-W(b)}}
\sum_{\omega\in\left(\mathbb{F}_2\right)^{n-k}}
V_1^b(\omega)\cdot V_2^b(\omega)
\right)\right|
\end{equation}
We now apply the triangular inequality, and rearranging the terms in the sum we obtain
\begin{equation*}
\delta_Y\leq\sum_{b\in\mathcal{C}\smallsetminus 0}
\left(
\frac{1}{2^{n-W(b)}}
\sum_{\omega\in\left(\mathbb{F}_2\right)^{n-k}}
\sum_{\gamma_1,\ldots,\gamma_k}
\left|V_1^b(\omega)\cdot V_2^b(\omega)\right|
\right)
\end{equation*}
Let us now consider a codeword $b$ with weight $l$. Inside the parenthesis we get
\begin{equation}\label{equation tvd fixed b 1}
\frac{1}{2^{n-l}}
\sum_{\omega\in\left(\mathbb{F}_2\right)^{n-k}}
\left| V_2^b(\omega)\right|
\sum_{\gamma_1,\ldots,\gamma_k}
\left|V_1^b(\omega)\right|
\end{equation}
In the first $k$ components of $b$ there are $l_1\leq l$ non-zero bits, meaning $V_1^b$ is the product of $l_1$ terms, each one depending on a different $\gamma_i$, so
\begin{equation*}
\sum_{\gamma_1,\ldots,\gamma_k}
\left|V_1^b(\omega)\right|
=
2^{k-l_1}\cdot \varepsilon_x^{l_1}
\end{equation*}
This no longer depends on $\omega$, so we can group it and take it outside of the sum. Hence equation \eqref{equation tvd fixed b 1} becomes
\begin{equation*}
\frac{1}{2^{n-l}}\cdot 2^{k-l_1}\cdot \varepsilon_x^{l_1}
\sum_{\omega\in\left(\mathbb{F}_2\right)^{n-k}}
\left| V_2^b(\omega)\right|
\end{equation*}
In the same way, in the last $n-k$ components of the word $b$ there are $l-l_1$ non zero bits, so $V_2^b$ is the product of $l-l_1$ terms, each one depending on a different $\omega_i$. The sum becomes
\begin{equation}\label{equation bound fixed b}
\frac{1}{2^{n-l}}\cdot 2^{k-l_1}\cdot \varepsilon_x^{l_1}
\cdot 2^{n-k-(l-l_1)}\cdot \varepsilon_x^{l-l_1}
=
\varepsilon_x^l
\end{equation}
So, putting together equations \eqref{equation TVD V1 V1} and \eqref{equation bound fixed b}, we obtain
\begin{equation*}
\delta_Y\leq\sum_{b\in\mathcal{C}\smallsetminus0}\varepsilon_x^{W(b)}=\sum_{l=1}^{n}A_l\varepsilon_x^l
\end{equation*}

\end{proof}

This theorem provides a bound on the total variation distance of the output process considered as a vector of $k$ bits. 
\\
We remark that from Theorem \ref{thm infinity norm} we get
\begin{equation}\label{eq: worst case bound TVD}
\delta_Y\leq 2^k\left\|P(\bar{Y})-\frac{1}{2^k} \right\|_{\infty}\leq 2^k\varepsilon_x^d
\end{equation}
This is exactly \eqref{eq our bound} in the case of a code in which all the codewords have weight equal to the minimum distance. Hence, to achieve better results, among the codes with a fixed minimum distance $d$ we can look at the weight distributions in order to minimize the bound in equation \eqref{eq our bound}.

\section{An entropy bound}\label{section: entropy}
Using the bound on TVD obtained in \eqref{eq our bound} it is possible to obtain bounds on the entropy rate of the RNG. We examine one such bound to compare the effectiveness of the bound based on the weight distribution given in \eqref{eq our bound} compared to the bound based only on minimum distance, given in \eqref{eq: worst case bound TVD}. Additionally, we can then compare the bounds on Shannon entropy thus obtained to the bound on min-entropy in \cite{CGC-misc-art-lacharme2008analysis}
\begin{defn}
Given an IID process $Z_t$ taking values in $\Omega_Z$ with probability mass function $P$, the \textbf{entropy rate} is computed as 
\begin{equation}\nonumber
H(Z)=-\sum_{\omega\in\Omega_Z}P(\omega)\log_{|\Omega|}\left(P(\omega)\right)
\end{equation}
\end{defn}
We remark that $H(Z)\in[0,1]$ and that $H(Z)< 1$ whenever $P$ is not the uniform distribution. 
\\
In practice lower bounds on $H$ are used, such as the min-entropy.
\begin{defn}
The \textbf{min-entropy} $H_{\mathrm{min}}$ of $Z$ is
\begin{equation}\nonumber
H_{\mathrm{min}}(Z)=-\log_{|\Omega_Z|}\left(\max_{\omega\in\Omega_Z}P(\omega)\right)
\end{equation}
\end{defn}
Using this definition, the following lower bound on the min-entropy is shown in \cite{CGC-misc-art-lacharme2008analysis}
\begin{cor}[Lacharme]
Under the same hypotheses of Theorem \ref{thm infinity norm}
\begin{equation}\nonumber
H_{\mathrm{min}}(\bar{Y})\ge 1 - \log_{2^k}\left(1 + 2^k\cdot\varepsilon_x^{d} \right)
\end{equation}
\end{cor}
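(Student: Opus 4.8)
The plan is to substitute the pointwise bound of Theorem \ref{thm infinity norm} directly into the definition of min-entropy and then simplify the resulting logarithm. Since the output space is $\Omega_{\bar Y}=\left(\mathbb{F}_2\right)^k$, we have $|\Omega_{\bar Y}|=2^k$, so by definition
\begin{equation*}
H_{\mathrm{min}}(\bar Y)=-\log_{2^k}\left(\max_{\gamma\in\left(\mathbb{F}_2\right)^k}P(\bar Y=\gamma)\right).
\end{equation*}
First I would invoke the bound \eqref{bound infinity norm}, which gives $P(\bar Y=\gamma)\le \frac{1}{2^k}+\varepsilon_x^{d}$ uniformly in $\gamma$; taking the maximum over $\gamma$ preserves this, so $\max_\gamma P(\bar Y=\gamma)\le \frac{1}{2^k}+\varepsilon_x^{d}$.

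The one point that must be handled with care is the direction of the inequality: since the function $t\mapsto-\log_{2^k}(t)$ is strictly decreasing on $(0,\infty)$, an upper bound on the maximal probability yields a \emph{lower} bound on the min-entropy. Applying this monotonicity gives
\begin{equation*}
H_{\mathrm{min}}(\bar Y)\ge -\log_{2^k}\left(\frac{1}{2^k}+\varepsilon_x^{d}\right).
\end{equation*}

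It then remains to rewrite the right-hand side in the stated form. Writing $\frac{1}{2^k}+\varepsilon_x^{d}=\frac{1+2^k\varepsilon_x^{d}}{2^k}$ and splitting the logarithm of the quotient, the term $-\log_{2^k}\!\left(\tfrac{1}{2^k}\right)=\log_{2^k}(2^k)=1$ separates off, leaving
\begin{equation*}
H_{\mathrm{min}}(\bar Y)\ge 1-\log_{2^k}\left(1+2^k\cdot\varepsilon_x^{d}\right),
\end{equation*}
which is the claim. There is no substantive obstacle here: the argument is a one-line application of Theorem \ref{thm infinity norm} followed by routine logarithm algebra, and the only thing to verify is that the monotonicity of $-\log$ is used in the correct direction so that the pointwise upper bound translates into the desired entropy lower bound.
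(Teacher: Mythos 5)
Your proof is correct and is exactly the derivation the paper leaves implicit (the corollary is stated without proof, citing Lacharme): substitute the uniform pointwise bound of Theorem \ref{thm infinity norm} into the definition of min-entropy, use that $t\mapsto-\log_{2^k}(t)$ is decreasing so the upper bound on $\max_\gamma P(\bar Y=\gamma)$ becomes a lower bound on $H_{\mathrm{min}}$, and simplify the logarithm. No gaps.
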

The following bound is proved in \cite{CGC-cry-prep-sason2012entropy}
\begin{thm}[Sason]\label{thm: Sason}
Let $Z_1$ and $Z_2$ be two discrete random variables that take values in a finite set $\Omega$ with probability distributions $P_1$ and $P_2$, and let $M=|\Omega|$. Then 
\begin{equation}\label{eq: bound entropy difference}
\left|H(Z_1)-H(Z_2)\right|
\leq
\frac{\delta}{2}\log_{M}(M-1)+h\left(\frac{\delta}{2}\right)
\end{equation}
where $h\left(\frac{\delta}{2}\right)=-\frac{\delta}{2}\log_{M}\delta-\left(1-\frac{\delta}{2}\right)\log_{M}\left(1-\frac{\delta}{2}\right)$, and $\frac{\delta}{2}=\mathrm{TVD}(P_1,P_2)$
\end{thm}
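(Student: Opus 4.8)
The natural route is via \emph{maximal coupling}, matching the method indicated in \cite{CGC-cry-prep-sason2012entropy}. Write $\epsilon=\tfrac{\delta}{2}=\mathrm{TVD}(P_1,P_2)$ and, since the claim concerns $|H(Z_1)-H(Z_2)|$, assume without loss of generality that $H(Z_1)\ge H(Z_2)$ so that it suffices to bound $H(Z_1)-H(Z_2)$. The plan is to build a single pair $(Z_1,Z_2)$ on $\Omega\times\Omega$ with marginals $P_1$ and $P_2$ for which the disagreement probability is as small as possible; the maximal coupling theorem furnishes such a pair with $P(Z_1\ne Z_2)=\mathrm{TVD}(P_1,P_2)=\epsilon$. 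In this coupling the common part lives on the diagonal with mass $\min(P_1,P_2)$, while on the disagreement event $Z_1$ is supported on $A=\{i:P_1(i)>P_2(i)\}$ and $Z_2$ on the \emph{disjoint} set $B=\{i:P_2(i)>P_1(i)\}$, with conditional laws $(P_1-P_2)_+/\epsilon$ and $(P_2-P_1)_+/\epsilon$ respectively.

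First I would convert the entropy gap into a conditional entropy. Because $H(Z_1)\le H(Z_1,Z_2)=H(Z_2)+H(Z_1\mid Z_2)$, we obtain $H(Z_1)-H(Z_2)\le H(Z_1\mid Z_2)$, so the whole problem reduces to bounding $H(Z_1\mid Z_2)$ for the coupled pair. Second, I would run a Fano-type argument on this term. Introducing the error indicator $E=\mathbf 1[Z_1\ne Z_2]$ with $P(E=1)=\epsilon$, the chain rule gives $H(Z_1\mid Z_2)=H(Z_1,E\mid Z_2)\le H(E)+H(Z_1\mid E,Z_2)$, and since $Z_1=Z_2$ on $\{E=0\}$ the second term collapses to $\epsilon\,H(Z_1\mid E=1,Z_2)$. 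The decisive feature of the maximal coupling is that, conditioned on $E=1$, $Z_1$ ranges only over $A$, which is disjoint from the nonempty set $B$ and hence has at most $M-1$ elements; therefore $H(Z_1\mid E=1,Z_2)\le\log_M(M-1)$. Writing $h_b(\epsilon)=-\epsilon\log_M\epsilon-(1-\epsilon)\log_M(1-\epsilon)$ for the binary entropy in base $M$, this yields $H(Z_1)-H(Z_2)\le h_b(\epsilon)+\epsilon\log_M(M-1)$, and symmetrising in $Z_1,Z_2$ removes the ordering assumption and reproduces the structure of \eqref{eq: bound entropy difference}.

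The main obstacle is sharpening the binary-entropy term $h_b(\epsilon)$ into the slightly smaller function $h(\tfrac{\delta}{2})$ of the statement, whose first logarithm carries $\delta$ rather than $\delta/2$: indeed $h_b(\tfrac{\delta}{2})-h(\tfrac{\delta}{2})=\tfrac{\delta}{2}\log_M 2>0$, so the crude indicator split above is genuinely lossy. Closing this gap requires using the exact structure of the maximal coupling rather than generic Fano: for $z\in B$ one has the explicit decomposition $H(Z_1\mid Z_2=z)=h_b(p_z)+p_z\,H(r)$ with $p_z=1-P_1(z)/P_2(z)$ and $r=(P_1-P_2)_+/\epsilon$, whence $H(Z_1\mid Z_2)=\sum_{z\in B}P_2(z)\bigl[h_b(p_z)+p_z\,H(r)\bigr]$, subject to $\sum_{z\in B}P_2(z)p_z=\epsilon$. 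The delicate point is that the two contributions must be estimated \emph{jointly} — feeding the disjoint-support bookkeeping on $A$ and $B$ back into the binary term through the constraints $\sum P_2(z)p_z=\epsilon$ and $H(r)\le\log_M|A|$ — rather than separately via concavity and the entropy-maximum bound, which only recovers $h_b(\epsilon)$. I would therefore identify the extremal configuration of $(P_1,P_2)$ saturating these coupled constraints, verify it is an admissible pair of distributions with TVD exactly $\epsilon$, and check that no boundary case (e.g. $|A|=1$, or $P_2$ concentrated on $B$) beats it; carrying out this extremal analysis to land precisely on $h(\tfrac{\delta}{2})$ is the part I expect to require the most care.
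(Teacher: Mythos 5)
The paper itself gives no proof of this theorem --- it is quoted from \cite{CGC-cry-prep-sason2012entropy}, whose method is precisely the maximal-coupling-plus-Fano route you follow --- so your first two paragraphs are on target and in fact already constitute a complete proof of the bound
\[
\left|H(Z_1)-H(Z_2)\right|\;\le\;\frac{\delta}{2}\log_M(M-1)+h_b\!\left(\tfrac{\delta}{2}\right),
\qquad h_b(\epsilon)=-\epsilon\log_M\epsilon-(1-\epsilon)\log_M(1-\epsilon).
\]
Each step is sound: the maximal coupling achieves $P(Z_1\ne Z_2)=\mathrm{TVD}(P_1,P_2)$; the reduction $H(Z_1)-H(Z_2)\le H(Z_1\mid Z_2)$ is the chain rule plus $H(Z_1)\le H(Z_1,Z_2)$; and conditioned on disagreement $Z_1$ is supported on $A$, which is disjoint from the nonempty set $B$, giving the $\log_M(M-1)$ term.

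The problem is your third paragraph. The discrepancy you correctly identify between $h_b(\tfrac{\delta}{2})$ and the paper's $h(\tfrac{\delta}{2})$ cannot be closed, because the statement as printed is false: the argument of the first logarithm should be $\tfrac{\delta}{2}$, not $\delta$ (evidently a misprint). Concretely, take $M=2$, $P_1=(1,0)$, $P_2=(0.7,0.3)$, so $\tfrac{\delta}{2}=0.3$. Then $\left|H(Z_1)-H(Z_2)\right|=h_b(0.3)\approx 0.881$, while the printed right-hand side equals $0.3\log_2 1-0.3\log_2 0.6-0.7\log_2 0.7\approx 0.581$. The same example shows your $h_b$ term cannot be improved either: this is the extremal configuration attaining Zhang's bound, which is what Sason reproves by exactly your argument. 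So the joint extremal analysis you propose in order to shave off the extra $\tfrac{\delta}{2}\log_M 2$ would necessarily fail; the correct course is to stop after your second paragraph and flag the misprint in the theorem statement (and in Corollary \ref{cor entropy rate}, where it propagates).
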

We apply this theorem to our case, using $P_1$ the probability distribution of the output $\bar{Y}$, $\Omega=\left(\mathbb{F}_2\right)^k$ and $P_2=U_{\Omega}$ and we obtain the following result.
\begin{cor}\label{cor entropy rate}
Under the hypotheses of Theorem \ref{theorem weight distribution} the entropy rate of $\bar{Y}$ is bounded:
\begin{equation}\nonumber
H(\bar{Y})\ge 1-\frac{\delta}{2}\log_{2^k}\left(2^k-1\right)-h\left(\frac{\delta}{2}\right)
\end{equation}
where
\begin{equation}\label{eq: def delta for H bound}
\delta=\sum_{l=d}^n A_l \varepsilon_x^l
\end{equation}
\end{cor}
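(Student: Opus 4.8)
The plan is to apply Theorem~\ref{thm: Sason} (Sason's bound) directly with the substitutions $P_1 = P_{\bar Y}$, $P_2 = U_{\Omega}$ where $\Omega = (\mathbb{F}_2)^k$, so that $M = |\Omega| = 2^k$. Since $U_\Omega$ is the uniform distribution on a set of $2^k$ elements, its entropy rate is $H(U_\Omega) = 1$ (this follows immediately from the definition, as $\log_{2^k}(2^k) = 1$). Thus the left-hand side $|H(Z_1) - H(Z_2)| = |H(\bar Y) - 1| = 1 - H(\bar Y)$, using the remark that $H(\bar Y) \le 1$ always. The bound \eqref{eq: bound entropy difference} then reads
\begin{equation*}
1 - H(\bar Y) \le \frac{\delta}{2}\log_{2^k}(2^k - 1) + h\!\left(\frac{\delta}{2}\right),
\end{equation*}
which rearranges to exactly the claimed inequality $H(\bar Y) \ge 1 - \frac{\delta}{2}\log_{2^k}(2^k - 1) - h(\frac{\delta}{2})$.

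The only remaining point is to justify that the quantity $\frac{\delta}{2}$ appearing in Sason's theorem, which there denotes $\mathrm{TVD}(P_1, P_2)$, can be taken to be $\frac{1}{2}\sum_{l=d}^n A_l \varepsilon_x^l$ as defined in \eqref{eq: def delta for H bound}. By Theorem~\ref{theorem weight distribution} we have the genuine total variation distance bounded as $\frac{\delta_Y}{2} = \mathrm{TVD}(P_{\bar Y}, U_\Omega) \le \frac{1}{2}\sum_{l=d}^n A_l \varepsilon_x^l = \frac{\delta}{2}$. So the TVD is at most the $\delta$ of the corollary, and I would substitute this upper bound into Sason's inequality.

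Here I expect the one genuine subtlety: Sason's bound is not automatically monotone in the TVD, so replacing the true TVD $\frac{\delta_Y}{2}$ by the larger quantity $\frac{\delta}{2}$ requires checking that the right-hand side $g(x) := x\log_{2^k}(2^k-1) + h(x)$ is nondecreasing on the relevant range $x \in [0, 1/2]$. The plan is to verify this by differentiating: $g'(x) = \log_{2^k}(2^k - 1) + \log_{2^k}\frac{1-x}{x}$, which is nonnegative precisely when $(2^k-1)\frac{1-x}{x} \ge 1$, i.e. when $x \le \frac{2^k-1}{2^k}$. Since $x \le 1/2 \le \frac{2^k-1}{2^k}$ for all $k \ge 1$, monotonicity holds throughout, and enlarging the argument from $\frac{\delta_Y}{2}$ to $\frac{\delta}{2}$ only weakens the bound in the safe direction. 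This monotonicity check is the main (and really the only) obstacle; everything else is a direct substitution. Once it is in place, combining the monotonicity with the TVD bound from Theorem~\ref{theorem weight distribution} and Sason's inequality yields the stated lower bound on $H(\bar Y)$.
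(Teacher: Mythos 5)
Your proposal is correct and follows essentially the same route as the paper, which simply applies Theorem~\ref{thm: Sason} with $P_1=P_{\bar Y}$, $P_2=U_{(\mathbb{F}_2)^k}$ and $M=2^k$ and offers no further justification. Your additional check that $x\mapsto x\log_{2^k}(2^k-1)+h(x)$ is nondecreasing on the relevant range is a genuine (and correct) refinement the paper omits: it is needed because Theorem~\ref{theorem weight distribution} supplies only an upper bound on $\mathrm{TVD}(P_{\bar Y},U)$ rather than its exact value, so substituting $\frac{\delta}{2}$ for the true distance requires exactly the monotonicity you verify.
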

The same estimate can be made using equation \eqref{eq: worst case bound TVD} rather than \eqref{eq our bound} to obtain the worst-case scenario in which all codewords are of weight $d$.
In this case \eqref{eq: def delta for H bound} becomes $$\delta=2^k\varepsilon_x^d.$$
\\
In Figures \ref{figure: RM16} and \ref{figure: RM256} we show the results of using the three lower bounds for the entropy in the case of using two different Reed-Muller codes. 

\begin{figure}
\centering
\begin{minipage}{.45\textwidth}
  \centering
  \includegraphics[width=1\linewidth]{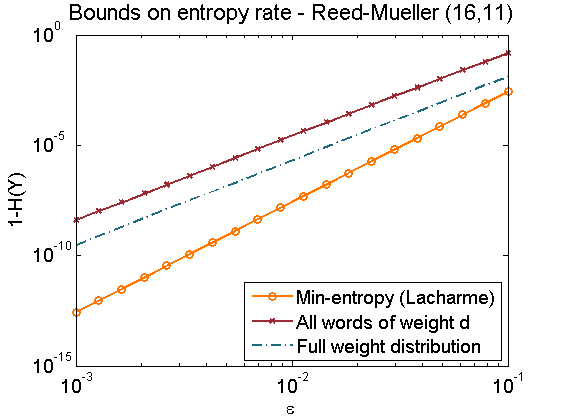}
  \caption{Bounds on the entropy of the output $\bar{Y}$ after the application of a post-processing built from a Reed-Muller [16,11] code.}
\label{figure: RM16}
\end{minipage}%
\hspace{.05\textwidth}
\begin{minipage}{.45\textwidth}
  \centering
  \includegraphics[width=1\linewidth]{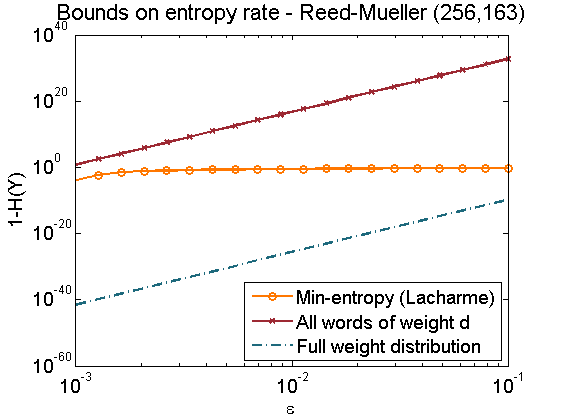}
  \caption{Bounds on the entropy of the output $\bar{Y}$ after the application of a post-processing built from a Reed-Muller [256,163] code.}
\label{figure: RM256}
\end{minipage}
\end{figure}

%
%
%

\section{Conclusions} 

In this work we consider the known results on the upper bound on the output bias found in Theorems \ref{binary extractor thm} and \ref{thm infinity norm} and we introduce a new bound for the same processing considering the output as the whole vector of $k$ bits and computing the TVD instead of the minimal entropy. 
The known bound relies on the minimum distance of the code generated by the matrix. Using this we can choose an appropriate extractor by choosing a generator matrix whose code has a large enough minimum distance.
Using the new bound presented in Theorem \ref{theorem weight distribution}, we can choose the linear extractor by looking at the weight distribution of linear codes whose minimum distance is large enough to achieve the required minimal entropy.
\\
In section \ref{section: entropy} we used the known Theorem \ref{thm: Sason} to obtain a lower bound for the entropy rate of the output process. Experimental results show how in certain cases the knowledge of the entire weight distribution of the linear code associated to the post-processing is helpful to obtain good bounds on the output entropy.

\section{Acknowledgements}
These results come from the first author's MSc thesis and so he would like to thank his tutor (the third author) and his supervisor (the second author).

Part of this research was funded by the Autonomous Province of Trento, Call “Grandi Progetti 2012”, project “On silicon quantum optics for quantum computing and secure communications – SiQuro”.

\bibliographystyle{amsalpha}
\bibliography{RefsCGC}

\end{document}